\newtheorem{theorem}{{Theorem}}
\newtheorem{corollary}{{ Corollary}}
\newcounter{tempequationcounter}
\begin{document}

\title{\LARGE{Jamming Energy Allocation in Training-Based Multiple Access Systems} }

\author{
\authorblockN{Hamed Pezeshki, \emph{Student Member, IEEE,} 
Xiangyun Zhou,  \emph{Member, IEEE}, 
and Behrouz Maham, \emph{Member, IEEE}
} 
\\
    \thanks{ 
    The work of X. Zhou was supported by the Australian Research Council's Discovery Projects funding scheme (Project No. DP110102548).     
    
    Hamed Pezeshki and Behrouz Maham are with the School of Electrical and
    Computer Engineering, University of Tehran, Tehran 14395-515, Iran (e-mails: \{h.pezeshki, bmaham\}@ut.ac.ir).
    
    X. Zhou is with the Research School of Engineering, The Australian National
    University, ACT 0200, Australia (e-mail: {xiangyun.zhou@anu.edu.au).}}
\vspace{-9mm}
}

{}
\maketitle

\IEEEpeerreviewmaketitle
\begin{abstract}
We consider the problem of jamming attack in a multiple access channel with training-based transmission. First, we derive upper and lower bounds on the maximum achievable ergodic sum-rate which explicitly shows the impact of jamming during both the training phase and the data transmission phase. Then, from the jammer's design perspective, we analytically find the optimal jamming energy allocation between the two phases that minimizes the derived bounds on the ergodic sum-rate. Numerical results demonstrate that the obtained optimal jamming design reduces the ergodic sum-rate of the legitimate users considerably in comparison to fixed power jamming.
\\
\\\emph{Index Terms}--- jamming, training-based transmission, optimum energy allocation, multiple access channel.
\end{abstract}
\vspace{-0.4cm}
\section{Introduction}
\IEEEPARstart{T}{he} problem of jamming exists in many practical wireless communication scenarios and can culminate in dissatisfactory Quality of Service (QoS) in commercial networks. For communication systems requiring training signals to facilitate channel estimation at the receiver, smart jamming attack during the training phase can result in a detrimental effect \cite{acm}, \cite{han}. Recently, optimal jamming energy allocation against training-based, single-user point-to-point systems was studied in \cite{Sean}, where the jammer optimally allocates its energy to attack both the training and data transmission phases. In this letter, we focus on a multiuser system over a multiple access channel (MAC) \cite{main}, \cite{Soysal} and design the jamming energy allocation from the attacker's perspective.

In the multiuser MAC setting, the jammer's design depends on the system parameters of \textit{all} the users. For instance, the optimal power level for jamming a particular user's training signal depends not only on the training and data transmission powers of this user, but also on the power levels of all the other users in both training and data transmission. Hence, the jammer's design in the multiuser system is much more challenging compared with the single-user system because of the complex interdependence between a large number of parameters of all users.

To facilitate the smart jammer design, we first derive tractable upper and lower bounds on the ergodic sum-rate of the MAC system, taking into account the different jamming powers used during training and data transmission. Then, we find analytical solutions to the jamming energy allocation which minimizes the bounds on the ergodic sum-rate. Our numerical results compare the ergodic sum-rate achieved under the optimal jamming energy allocation with that under fixed power jamming, i.e., the non-optimized case. A significant rate reduction from the non-optimized case to the optimized case is observed, ranging from 35\% to 90\%, which clearly shows the potential impact of smart jamming attack in training-based MAC systems.
\vspace{-0.3cm}
\section{System Model}
We consider a block-fading single-antenna narrowband multiple access channel (MAC) with $K$ users. This structure forms the legitimate users of our model. We also assume that there exists a smart jammer which aims to reduce the ergodic sum-rate of the legitimate users by injecting an artificial noise. The input-output relationship of the legitimate users is given by\footnote[1]{Note that the fading gain of the channel from the jammer to the receiver is absorbed into the jamming signal $w$ \cite{Sean}, \cite{corr}. In order for the received jamming signal to be Gaussian, the jammer needs to carefully design its signal by taking the channel statistics into account.}
\begin{equation}\label{1}
y=\sum_{k=1}^{K}h_kx_k+n+w
\end{equation}
where $x_k$ and $y$ are the transmitted symbol by user $k$, $k \in \mathcal{K}=\{1,...,K\}$, and the received symbol, respectively. $h_k\sim \mathcal{CN}(0,1)$ is the channel coefficient of user $k$, $n\sim \mathcal{CN}(0,1)$ is the additive white Gaussian noise and finally $w$ is the artificial noise injected by the jammer, which is again assumed to be Gaussian, since for Gaussian channels the Gaussian distribution is the best from the jammer's design point of view \cite{corr}. We have normalized the variances of $h_k$ and $n$ to one, without loss of generality.

We consider a training-based blockwise transmission, where each block in every user starts with a training phase followed by a data transmission phase. The training signals of the users are assumed to be non-overlapping in time \cite{main}, \cite{Soysal}\footnote[2]{The assumption of non-overlapping training sequences is often used in multiuser MAC systems \cite{main}, \cite{Soysal}, in order to avoid interference between the pilot symbols of the users. With the non-overlapping training sequences, the base station can employ simple estimation methods to obtain accurate channel estimates, which result in a good capacity or error probability performance. The training overhead is relatively small in slow fading channels with a small number of MAC users, while it increases as the number of users increases.}.
Each user sends $T_{t_k}$ pilot symbols; therefore, the total training duration is $T_t=\sum_{k=1}^{K}T_{t_k}$. During the training phase, the receiver performs a linear minimum mean square error (LMMSE) estimation \cite{Poor} to estimate the channels of the users using the $T_t$ pilot symbols. During the remaining $T_d$ symbols, data transmission occurs and the receiver uses the estimated channel gains to detect the data.
Therefore, the total block length for every user is $T=T_t+T_d$ symbol periods. We consider a block-fading scenario where the channel remains constant during one block and changes to an independent and identically distributed realization in the next block. Furthermore, each user transmits its pilot and data symbols with power $P_{t_k}$ and $P_{d_k}$, respectively. On the attacker side, the jammer also uses different jamming power, i.e., $P_{wt_k}$ and $P_{wd}$, to jam the training symbols of user $k$ and the data symbols of all the users, respectively. 

We assume that the jammer can acquire the values of $P_{t_k}$ and $P_{d_k}$ for all the users by eavesdropping on the feedback channel from the receiver, e.g., base station, to the users. This technique has been proposed in the literature to obtain information about the attacked communication system. For instance, it is assumed in \cite{Shafiee} that the jammer can infer the user channel states by eavesdropping on the feedback channel from the receiver to the users in fading multiple access channels. In the timing perspective, we also assume that the jammer is capable of eavesdropping on the feedback or command signalling between base station and the users. The jammer is assumed to be able to listen to the signalling exchanges during the synchronization stage of the legitimate network and hence obtain the timing information, i.e., the values of $T_{t_k}$ and $T_d$.

We consider the LMMSE estimator. For the channel of every user, we have $h_k=\hat{h}_k+\tilde{h}_k$, where $\hat{h}_k$ and $\tilde{h}_k$ are the estimated channel and channel estimation error of user $k$, respectively, and we have \cite{Poor}
\begin{equation}\label{2}
 \sigma_{\hat{h}_k}^2=\frac{\frac{P_{t_k}}{1+P_{wt_k}}T_{t_k}}{1+\frac{P_{t_k}}{1+P_{wt_k}}T_{t_k}}\hspace{2mm}\textnormal{and}\hspace{2mm} \sigma_{\tilde{h}_k}^2=\frac{1}{1+\frac{P_{t_k}}{1+P_{wt_k}}T_{t_k}}
\end{equation}
where $\sigma_{\hat{h}_k}^2$ and $\sigma_{\tilde{h}_k}^2$ are the variances of $\hat{h}_k$ and $\tilde{h}_k$, respectively.
\vspace{-0.6cm}
\section{Problem Formulation}
We consider a smart jamming design in an energy-constrained scenario and denote the average power budget of the jammer by $P_w$. Hence, we have $P_wT=\sum_{k=1}^{K}P_{wt_k}T_{t_k}+P_{w_d}T_d$. 
We assume that the jammer allocates the ratio $\zeta_{t_k}$ and $\zeta_d$ of its total energy to jam the training signal of user $k$ and the data symbols of all the users, respectively. Hence, we have
\begin{equation}\label{3}
 \zeta_{t_k}=\frac{P_{wt_k}T_{t_k}}{P_wT}\hspace{2mm}\textnormal{and}\hspace{2mm}
 \zeta_d=\frac{P_{w_d}T_d}{P_wT}.
\end{equation}
Define $\zeta_t=\sum_{k=1}^{K}\zeta_{t_k}$, which is the ratio of the energy that the jammer allocates to the training phase. Then, we have $\zeta_t+\zeta_d=1.$
In this letter, we look from the jammer's point of view and find the optimal values of $\{\zeta_{t_k}\}_{k\in \mathcal{K}}$ and $\zeta_d$ which minimize the achievable ergodic sum-rate, when the legitimate users' system parameters are known, i.e., for \textit{any} given values of $\{P_{t_k}\}_{k\in \mathcal{K}}$, $\{P_{d_k}\}_{k\in \mathcal{K}}$ and $\{T_{t_k}\}_{k\in \mathcal{K}}$.

When the CSI at the receiver is noisy, the optimum input distribution and hence the exact expression for the ergodic sum-capacity are unknown. However, the maximum achievable sum-rate with Gaussian signalling is given by  \cite{Soysal}
\begin{equation}\label{4}
R=\frac{T-T_t}{T}\mathbb{E}_{\hat{{h}}_k}\left[\log_2\left(1+\frac{\sum_{k=1}^{K}\frac{P_{d_k}}{1+P_{wd}}|\hat{h}_k|^2}{1+\sum_{k=1}^{K}\sigma_{\tilde{h}_k}^2\frac{P_{d_k}}{1+P_{wd}}}\right)\right]
\end{equation}
where $\frac{T-T_t}{T}$ reflects the amount of time spent during the training phase and leads to the capacity loss. Now, we further obtain an upper bound and a lower bound on $R$.

\textit{1) Upper-Bound:} Considering the fact that $\log_2(1+x)$ is a concave function of $x$, we use Jensen's inequality to derive an upper bound on $R$. Hence, we have
\begin{equation}\label{5}
\begin{split}
R &\leq \frac{T-T_t}{T}\log_2\left(1+\frac{\mathbb{E}_{\hat{{h}}_k}\left\{\sum_{k=1}^{K}\frac{P_{d_k}}{1+P_{wd}}|\hat{h}_k|^2\right\}}{1+\sum_{k=1}^{K}\sigma_{\tilde{h}_k}^2\frac{P_{d_k}}{1+P_{wd}}}\right)\\
&=\frac{T-T_t}{T}\log_2\left(1+\rho(\{\zeta_{t_k}\}_{k\in \mathcal{K}},\zeta_d)\right)=R_{UB}
\end{split}
\end{equation}
where
\begin{equation}\label{6}
\rho(\{\zeta_{t_k}\}_{k\in \mathcal{K}},\zeta_d)=\frac{\sum_{k=1}^{K}\frac{P_{d_k}}{1+P_{wd}}\sigma_{\hat{h}_k}^2}{1+\sum_{k=1}^{K}\sigma_{\tilde{h}_k}^2\frac{P_{d_k}}{1+P_{wd}}}.
\end{equation}
\textit{2) Lower-Bound:} Define the vector $[x_1,...,x_k]$ of multiple variables. Then, $\log_2(1+\sum_{k=1}^{K}a_ke^{x_k})$ is a convex function on $\mathbb{R}^K$ for arbitrary $a_k>0$ \cite[Lemma 3]{convex}. Hence by applying Jensen's inequality in \eqref{4}, we have
 \begin{figure*}[!b]
 \vspace{-0.4cm}
 \normalsize
 \hrulefill
 \setcounter{tempequationcounter}{\value{equation}}
 \setcounter{equation}{12}
 \begin{equation}\label{f1}
\zeta_{t_k}^*=\frac{1}{P_wT}\left(\frac{\sqrt{P_wT\gamma(\zeta_d^*)P_{d_k}P_{t_k}T_{t_k}^2\left(1+\gamma(\zeta_d^*)\sum_{k=1}^{K}P_{d_k}\right)}}{\sqrt{\nu^*}\left(1+\gamma(\zeta_d^*)\sum_{k=1}^{K}\beta(\zeta_{t_k}^*)\right)}
-T_{t_k}(1+P_{t_k}T_{t_k})\right)^+,\quad k\in\mathcal{K}
\vspace{-10mm}
 \end{equation}
 \setcounter{equation}{\value{tempequationcounter}}
 \vspace*{4pt}
 \end{figure*}
\begin{equation}\label{7}
   R\geq\frac{T_d}{T}\log_2\left(1+\sum_{k=1}^{K}\frac{\frac{P_{d_k}}{1+P_{wd}}\exp(\mathbb{E}\{\log[|\hat{h}_k|^2]\})}{1+\sum_{k=1}^{K}\sigma_{\tilde{h}_k}^2\frac{P_{d_k}}{1+P_{wd}}}\right)=R_{LB}.
\end{equation}
Following \cite{convex}, we know that $\mathbb{E}\{\log[|\hat{h}_k|^2]\}=\log(\sigma_{\hat{h}_k}^2)+\psi(1)=\log(\sigma_{\hat{h}_k}^2)-\kappa$ where $\psi$ is the psi function \cite[Eq. (8.360)]{psi} and $\kappa\approx 0.577$ is the Euler's constant. Hence, we can derive a closed-form solution for $R_{LB}$, which is given by
\begin{equation}\label{8}
R_{LB}=\frac{T-T_t}{T}\log_2\left(1+\rho(\{\zeta_{t_k}\}_{k\in \mathcal{K}},\zeta_d)\exp(-\kappa)\right).
\end{equation}
We observe that the jamming strategy dependent term in both $R_{UB}$ and $R_{LB}$ is the same and given by $\rho(\{\zeta_{t_k}\}_{k\in \mathcal{K}},\zeta_d)$. 

To improve the analytical tractability of the jammer design, we design the jamming energy allocation strategy to minimize both the upper and lower bounds on the maximum achievable sum-rate, instead of the exact sum-rate expression in \eqref{4}. The solution obtained from this design is expected to have near-optimal performance in terms of minimizing the exact sum-rate. In order to minimize $R_{UB}$ and $R_{LB}$, we must minimize $\rho(\{\zeta_{t_k}\}_{k\in \mathcal{K}},\zeta_d)$; since $\log_2(1+ax)$ is a continuous and increasing function of $x$ with $a>0$.
By substituting $\sigma_{\hat{h}_k}^2$ and $\sigma_{\tilde{h}_k}^2$ from \eqref{2} into \eqref{6} and using \eqref{3}, we have
\begin{equation}\label{9}
\rho(\{\zeta_{t_k}\}_{k\in \mathcal{K}},\zeta_d)=\frac{\gamma(\zeta_d)\sum_{k=1}^{K}\alpha(\zeta_{t_k})}{1+\gamma(\zeta_d)\sum_{k=1}^{K}\beta(\zeta_{t_k})}
\end{equation}
where $\gamma(\zeta_d)=\frac{1}{1+\zeta_dP_wT/T_d}$ and
\begin{equation}\label{10}
\alpha(\zeta_{t_k})=\frac{\frac{P_{d_k}P_{t_k}T_{t_k}}{1+\zeta_{t_k}P_wT/T_{t_k}}}{1+\frac{P_{t_k}T_{t_k}}{1+\zeta_{t_k}P_wT/T_{t_k}}},\quad\beta(\zeta_{t_k})=\frac{P_{d_k}}{1+\frac{P_{t_k}T_{t_k}}{1+\zeta_{t_k}P_wT/T_{t_k}}}.
\end{equation}
Thus, we have driven an expression which includes the system parameters of all the users and we use \eqref{9} as the objective function of the optimization problem in the succeeding section.
\vspace{-0.8cm}
\section{Optimal Jamming Energy Allocation}
The optimal jamming energy allocation can be written as the following optimization problem:
\begin{equation}\label{11}
\arg\min_{\{\zeta_{t_k}\}_{k\in\mathcal{K}},\zeta_d} \rho(\{\zeta_{t_k}\}_{k\in \mathcal{K}},\zeta_d) 
\end{equation}
\begin{equation}\label{12}
\text{s.t.}\quad \{\zeta_{t_k}\}_{k\in\mathcal{K}},\zeta_d\geq 0, \sum_{k=1}^{K}\zeta_{t_k}+\zeta_d=1
\end{equation}
where $\rho(\{\zeta_{t_k}\}_{k\in \mathcal{K}},\zeta_d)$ is given by \eqref{9}. The following theorem presents the solution, i.e., $\{\zeta_{t_k}^*\}_{k\in \mathcal{K}}$ and $\zeta_d^*$. 
\addtocounter{equation}{1}
\vspace{-.5cm}
\subsection{Main Result}
\begin{theorem}\label{Th_1}
The optimal jamming energy allocation must satisfy the equations in \eqref{f1}-\eqref{15}, where \eqref{f1} is shown at the bottom of the page\footnote[3]{Here we employ the common notation, $(z)^+=\max(0,z)$.}.
\begin{equation}\label{14}
\zeta_d^*=\frac{1}{P_wT}\left(\frac{\sqrt{P_wTT_d\sum_{k=1}^{K}\alpha(\zeta_{t_k}^*)}}{\sqrt{\nu^*}\left(1+\gamma(\zeta_d^*)\sum_{k=1}^{K}\beta(\zeta_{t_k}^*)\right)}-T_d\right)^+
\end{equation}
\begin{equation}\label{15}
\sum_{k=1}^{K}\zeta_{t_k}^*+\zeta_d^*=1
\end{equation}
We have a system of $K+2$ non-linear equations and we can directly solve them to find the optimal values of $K+2$ variables. If $P_w$ is sufficiently high, all the calculated values of $\{\zeta_{t_k}^*\}_{k\in \mathcal{K}}$ and $\zeta_d^*$ will be positive. In this case, we can derive closed-form solutions for the optimal jamming energy allocation in terms of the training and data transmission parameters of all users, which is given as
\begin{equation}\label{16}
\zeta_{t_k}^*=\frac{P_wT+T+\delta+T_d\sum_{i\in \mathcal{K}}P_{d_k}-\frac{1+P_{t_k}T_{t_k}}{\sqrt{P_{d_k}P_{t_k}}}(2\eta)}{2P_wT(\frac{\eta}{T_{t_k}\sqrt{P_{d_k}P_{t_k}}})},\hspace{1mm} k\in \mathcal{K}
\end{equation}
\begin{equation}\label{17}
\zeta_d^*=\frac{1}{2}+\frac{T_t+\delta-T_d(1+\sum_{i\in\mathcal{K}}P_{d_k})}{2P_wT}
\end{equation}
where 
\begin{equation}\label{18}
\delta=\sum_{i\in \mathcal{K}}P_{t_i}T_{t_i}^2\hspace{2mm}\textnormal{and}\hspace{2mm}\eta=\sum_{i\in \mathcal{K}}T_{t_i}\sqrt{P_{d_i}P_{t_i}}.
\end{equation}
\end{theorem}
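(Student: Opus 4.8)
The plan is to attack \eqref{11}--\eqref{12} through its Karush--Kuhn--Tucker (KKT) conditions. The feasible set, the simplex $\{\zeta_{t_k},\zeta_d\ge 0,\ \sum_{k}\zeta_{t_k}+\zeta_d=1\}$, is compact and $\rho$ in \eqref{9} is $C^{1}$ on it; since all constraints are affine, a minimizer exists and necessarily satisfies the KKT system, so the phrase ``must satisfy'' in the statement is exactly the stationarity plus complementary-slackness relations. I would attach a multiplier $\nu$ to the equality constraint and multipliers $\lambda_k,\mu\ge 0$ to $\zeta_{t_k}\ge 0$ and $\zeta_d\ge 0$, so that stationarity reads $\partial\rho/\partial\zeta_{t_k}=\lambda_k-\nu$ and $\partial\rho/\partial\zeta_d=\mu-\nu$. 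Everything then reduces to evaluating those two derivatives from \eqref{9}--\eqref{10}.

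Two algebraic simplifications make the derivatives tractable, and I would extract them first. Writing $u_k:=1+\zeta_{t_k}P_wT/T_{t_k}$, \eqref{10} becomes $\alpha(\zeta_{t_k})=P_{d_k}P_{t_k}T_{t_k}/(u_k+P_{t_k}T_{t_k})$ and $\beta(\zeta_{t_k})=P_{d_k}u_k/(u_k+P_{t_k}T_{t_k})$, whence (i) $\alpha(\zeta_{t_k})+\beta(\zeta_{t_k})=P_{d_k}$ and $\mathrm d\alpha/\mathrm d\zeta_{t_k}=-\,\mathrm d\beta/\mathrm d\zeta_{t_k}=-P_{d_k}P_{t_k}P_wT/(u_k+P_{t_k}T_{t_k})^{2}$; and (ii) $\gamma'(\zeta_d)=-\gamma(\zeta_d)^{2}P_wT/T_d$. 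Plugging (i) into the quotient rule collapses $\partial\rho/\partial\zeta_{t_k}$ to $\gamma(\zeta_d)\bigl(1+\gamma(\zeta_d)\sum_jP_{d_j}\bigr)\,\alpha'(\zeta_{t_k})\big/\bigl(1+\gamma(\zeta_d)\sum_j\beta(\zeta_{t_j})\bigr)^{2}$, the cross terms cancelling precisely because of $\alpha+\beta=P_{d_k}$; (ii) gives $\partial\rho/\partial\zeta_d=-\gamma(\zeta_d)^{2}P_wT\sum_j\alpha(\zeta_{t_j})\big/\bigl(T_d(1+\gamma(\zeta_d)\sum_j\beta(\zeta_{t_j}))^{2}\bigr)$. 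Both are negative, so $\nu^{*}>0$. For indices with $\zeta_{t_k}^{*}>0$ set $\lambda_k=0$: then $\partial\rho/\partial\zeta_{t_k}=-\nu^{*}$ becomes a square in $(u_k+P_{t_k}T_{t_k})$; taking the positive root, solving back for $\zeta_{t_k}^{*}$, and appending $(\cdot)^{+}$ to encode the $\zeta_{t_k}^{*}=0$ alternative yields \eqref{f1}. The identical manipulation of $\partial\rho/\partial\zeta_d=-\nu^{*}$ gives \eqref{14}, and \eqref{15} is the primal equality; this is the announced system of $K+2$ equations in $\{\zeta_{t_k}^{*}\},\zeta_d^{*},\nu^{*}$.

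For the closed-form regime I would drop all $(\cdot)^{+}$'s (equivalently set $\lambda_k=\mu=0$), which is legitimate for large $P_w$: the bracketed terms of \eqref{f1} and \eqref{14} have strictly positive coefficients of $P_w$, and as $P_w\to\infty$ one finds $\zeta_d^{*}\to\tfrac{1}{2}$ and $\zeta_{t_k}^{*}\to T_{t_k}\sqrt{P_{d_k}P_{t_k}}/(2\eta)>0$, limits which already satisfy $\sum_k\zeta_{t_k}^{*}+\zeta_d^{*}=1$, so all components stay positive once $P_w$ is large enough. The elimination of $\nu^{*}$ is then clean. From \eqref{f1}, at the stationary point $u_k+P_{t_k}T_{t_k}=M\sqrt{P_{d_k}P_{t_k}}$ where $M:=\sqrt{P_wT\gamma(\zeta_d^{*})(1+\gamma(\zeta_d^{*})\sum_jP_{d_j})}\big/\bigl(\sqrt{\nu^{*}}(1+\gamma(\zeta_d^{*})\sum_j\beta(\zeta_{t_j}^{*}))\bigr)$ is common to all $k$; hence $\alpha(\zeta_{t_k}^{*})=T_{t_k}\sqrt{P_{d_k}P_{t_k}}/M$ and $\sum_k\alpha(\zeta_{t_k}^{*})=\eta/M$ with $\eta$ as in \eqref{18}. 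Substituting this into \eqref{14}, rewriting $1+\zeta_d^{*}P_wT/T_d=1/\gamma(\zeta_d^{*})$, and dividing by the definition of $M$ cancels $\nu^{*}$ together with $(1+\gamma(\zeta_d^{*})\sum_j\beta(\zeta_{t_j}^{*}))$, forcing $M=T_d(1+\gamma(\zeta_d^{*})\sum_jP_{d_j})/(\gamma(\zeta_d^{*})\eta)$. Finally, $\sum_k\zeta_{t_k}^{*}=(M\eta-\delta-T_t)/(P_wT)$ (with $\delta$ as in \eqref{18}) and $\zeta_d^{*}=T_d(1-\gamma(\zeta_d^{*}))/(\gamma(\zeta_d^{*})P_wT)$ substituted into \eqref{15} make it linear in $\gamma(\zeta_d^{*})$; solving it and back-substituting, using $T_t+T_d=T$, reproduces \eqref{16}--\eqref{17}.

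I expect the only real difficulty to be bookkeeping: propagating the sums $\sum_j\alpha(\zeta_{t_j})$, $\sum_j\beta(\zeta_{t_j})$, $\sum_jP_{d_j}$ and the quantities $T_t,T_d,\delta,\eta$ through the substitutions without an index or sign slip, and checking that each square root is taken on its feasible (positive) branch. One point I would also flag explicitly is why solving the KKT system returns the true optimum rather than a mere critical point: eliminating $\zeta_d$ turns the problem into minimizing $\sum_k\alpha(\zeta_{t_k})$ divided by an affine function that is strictly positive on the feasible simplex, and each $\alpha(\zeta_{t_k})$ is convex and decreasing, so the objective is quasiconvex on the simplex and the stationary solution found above is its global minimizer.
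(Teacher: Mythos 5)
Your proposal follows essentially the same route as the paper: form the KKT conditions for \eqref{11}--\eqref{12}, solve the stationarity equations for the positive components to obtain \eqref{f1} and \eqref{14} with \eqref{15} as the primal equality, then exploit the common constant $M$ (equivalent to the paper's relation \eqref{23}) together with the equality constraint to eliminate $\nu^*$ and recover \eqref{16}--\eqref{17}; your derivative computations (via $\alpha(\zeta_{t_k})+\beta(\zeta_{t_k})=P_{d_k}$ and $\gamma'=-\gamma^2P_wT/T_d$) and the subsequent algebra all check out against \eqref{21}--\eqref{22}. The only inaccuracy is in your closing optimality remark: after clearing $\gamma$ the denominator $1/\gamma+\sum_k\beta(\zeta_{t_k})$ is concave (affine in $\zeta_d$, concave in each $\zeta_{t_k}$), not affine --- quasiconvexity (indeed pseudoconvexity) of a nonnegative convex numerator over a positive concave denominator still holds, and in any case the theorem's ``must satisfy'' claim only requires KKT necessity, which your compactness-plus-affine-constraints argument already delivers (the paper instead asserts joint convexity via a ``positive Hessian,'' which is the less careful of the two justifications).
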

\begin{proof}
It can be shown that the Hessian of the objective function in \eqref{9} is positive, so the problem is convex in $\{\zeta_{t_k}\}_{k\in \mathcal{K}}$ and $\zeta_d$. Introducing Lagrange multipliers $\{\lambda_k^*\}_{k\in \mathcal{K}}$ and $\lambda_d^*$ for the inequality constraints and $\nu^*$ for the equality constraint in \eqref{12}, we obtain the KKT conditions as
\begin{equation}\label{19}
\{\zeta_{t_k}^*\}_{k\in\mathcal{K}},\zeta_d^*\geq 0,\quad\sum_{k=1}^{K}\zeta_{t_k}^*+\zeta_d^*=1,\quad\{\lambda_k^*\}_{k\in \mathcal{K}},\lambda_d^*\geq 0
\end{equation}
\begin{equation}\label{20}
\lambda_k^*\zeta_{t_k}^*=0,\quad k\in\mathcal{K},\quad \lambda_d^*\zeta_d^*=0
\end{equation}
\begin{equation}\label{21}
\begin{split}
&-\frac{P_wT\gamma(\zeta_d^*)P_{d_k}P_{t_k}T_{t_k}^2\left(1+\gamma(\zeta_d^*)\sum_{k=1}^{K}P_{d_k}\right)}{\left(P_w\zeta_{t_k}^*T+T_{t_k}(1+P_{t_k}T_{t_k})\right)^2\left(1+\gamma(\zeta_d^*)\sum_{k=1}^{K}\beta(\zeta_{t_k}^*)\right)^2}\\
&\hspace{52mm}-\lambda_k^*+\nu^*=0,\quad k\in\mathcal{K}
\end{split}
 \end{equation}
\begin{equation}\label{22}
-\frac{P_wTT_d\sum_{k=1}^{K}\alpha(\zeta_{t_k}^*)}{(P_w\zeta_d^*T+T_d)^2\left(1+\gamma(\zeta_d^*)\sum_{k=1}^{K}\beta(\zeta_{t_k}^*)\right)^2}-\lambda_d^*+\nu^*=0.
\end{equation}
Note that $\{\lambda_k^*\}_{k\in \mathcal{K}}$ and $\lambda_d^*$ act as slack variables in \eqref{21} and \eqref{22}; hence, they can be eliminated. If we solve \eqref{19}-\eqref{22}, we get to the equations in \eqref{f1}-\eqref{15}.

If the calculated values of $\zeta_{t_i}^*$ and $\zeta_{t_j}^*$ are positive, we conclude from \eqref{f1} that
\begin{equation}\label{23}
\begin{split}
&\frac{P_{d_i}P_{t_i}T_{t_i}^2}{\left(P_w \zeta_{t_i}^*T+(P_{t_i}T_{t_i}+1)T_{t_i}\right)^2}\\
&\hspace{12mm}=\frac{P_{d_j}P_{t_j}T_{t_j}^2}{\left(P_w \zeta_{t_j}^*T+(P_{t_j}T_{t_j}+1)T_{t_j}\right)^2},\quad i,j\in \mathcal{K}.
\end{split}
\end{equation}
As mentioned previously, for sufficiently high $P_w$, all the calculated values of $\{\zeta_{t_k}^*\}_{k\in \mathcal{K}}$ and $\zeta_d^*$ will be positive. In this case, we can calculate the values of $\{\zeta_{t_i}^*\}_{i\in\mathcal{K},i\neq k}$ in terms of $\zeta_{t_k}^*$ by \eqref{23}. By substituting $\alpha(\zeta_{t_k})$ from \eqref{10} into \eqref{14} and using \eqref{23}, we also calculate $\zeta_d^*$ in terms of $\zeta_{t_k}^*$. Now, $\{\zeta_{t_i}^*\}_{i\in\mathcal{K},i\neq k}$ and $\zeta_d^*$ have been calculated in terms of $\zeta_{t_k}^*$. If we substitute them into the equality constraint in \eqref{19}, we get to the equations in \eqref{16} and \eqref{17}.
\end{proof}
\vspace{-0.5cm}
\subsection{Further Analysis}
In the following, we consider various special cases to further analyze the impacts of different system parameters on the optimal jammer strategy.
\begin{corollary}
If $T_{t_k}=T_{t_j}\triangleq T_{t_c}, P_{t_k}=P_{t_j}\triangleq P_{t_c}$ and $P_{d_k}>P_{d_j}$, then $\zeta_{t_k}^*\geq\zeta_{t_j}^*$. 
\end{corollary}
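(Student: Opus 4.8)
The plan is to read the corollary directly off the characterization of the optimal training-phase allocations in \eqref{f1}. First I would specialize \eqref{f1} to two users $k$ and $j$ satisfying the hypotheses $T_{t_k}=T_{t_j}=T_{t_c}$ and $P_{t_k}=P_{t_j}=P_{t_c}$. The key observation is that, evaluated at the (fixed) optimal point, every factor appearing in \eqref{f1} other than the standalone $P_{d_k}$ inside the square root is common to all users: $\nu^*$ (which is strictly positive by \eqref{21}, since the fractional term there is positive and $\lambda_k^*\ge0$), $\gamma(\zeta_d^*)$, the aggregate $\sum_{k=1}^{K}P_{d_k}$, and the aggregate $\sum_{k=1}^{K}\beta(\zeta_{t_k}^*)$ in the denominator do not depend on the index of the user whose allocation is being computed. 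Consequently \eqref{f1} reduces to $\zeta_{t_k}^*=\frac{1}{P_wT}\bigl(A\sqrt{P_{d_k}}-B\bigr)^+$, with the same constants $A>0$ and $B=T_{t_c}(1+P_{t_c}T_{t_c})>0$ governing $\zeta_{t_j}^*$.

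Then I would invoke monotonicity: the map $x\mapsto\bigl(A\sqrt{x}-B\bigr)^+$ is nondecreasing on $[0,\infty)$, so $P_{d_k}>P_{d_j}$ immediately gives $\zeta_{t_k}^*\ge\zeta_{t_j}^*$. This single argument also covers the corner cases in which one or both allocations are clipped to zero by $(\cdot)^+$, so no separate case analysis is needed. An equivalent presentation works from the interior optimality relation \eqref{23}: when both allocations are positive, substituting the equal-parameter hypotheses and cancelling $P_{t_c}T_{t_c}^2$ gives $\frac{P_{d_k}}{(P_wT\zeta_{t_k}^*+C)^2}=\frac{P_{d_j}}{(P_wT\zeta_{t_j}^*+C)^2}$ with $C=T_{t_c}(1+P_{t_c}T_{t_c})>0$; taking positive square roots yields $P_wT\zeta_{t_k}^*+C=\sqrt{P_{d_k}/P_{d_j}}\,\bigl(P_wT\zeta_{t_j}^*+C\bigr)>P_wT\zeta_{t_j}^*+C$, hence $\zeta_{t_k}^*>\zeta_{t_j}^*$; the only remaining possibility, $\zeta_{t_k}^*=0<\zeta_{t_j}^*$, is ruled out because \eqref{f1} shows $\zeta_{t_j}^*>0\Rightarrow A\sqrt{P_{d_j}}>B\Rightarrow A\sqrt{P_{d_k}}>B\Rightarrow\zeta_{t_k}^*>0$.

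The only real subtlety — and hence the step I expect to require the most care in writing out — is that \eqref{f1} is an implicit equation: its right-hand side still contains $\beta(\zeta_{t_k}^*)$ and $\gamma(\zeta_d^*)$, so it is not a priori a clean function of $P_{d_k}$ alone. The resolution is exactly the remark above, namely that at a fixed optimal point these terms enter only through index-free aggregates, so comparing two users collapses the dependence to the monotone factor $\sqrt{P_{d_k}}$ versus $\sqrt{P_{d_j}}$. Once this is made explicit, the corollary is a one-line monotonicity statement.
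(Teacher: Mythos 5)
Your proposal is correct, and its core is the same KKT-based reading that the paper uses: the paper proves the corollary by specializing the interior optimality relation \eqref{23} to two users with common $T_{t_c},P_{t_c}$, obtaining $\frac{P_{d_k}}{P_{d_j}}=\bigl(\frac{P_w\zeta_{t_k}^*T+(P_{t_c}T_{t_c}+1)T_{t_c}}{P_w\zeta_{t_j}^*T+(P_{t_c}T_{t_c}+1)T_{t_c}}\bigr)^2>1$ and hence $\zeta_{t_k}^*>\zeta_{t_j}^*$, with equality "when both equal zero" --- exactly your second, alternative presentation. Your primary argument is a mild but genuine refinement: by working directly from \eqref{f1} and observing that $\nu^*>0$ (from \eqref{21}, since the fractional term is positive and $\lambda_k^*\geq 0$), $\gamma(\zeta_d^*)$, $\sum_{i}P_{d_i}$ and $\sum_{i}\beta(\zeta_{t_i}^*)$ are index-free at the fixed optimum, you collapse both allocations to $\zeta_{t}^*=\frac{1}{P_wT}\bigl(A\sqrt{P_{d}}-B\bigr)^+$ with common $A>0$, $B=T_{t_c}(1+P_{t_c}T_{t_c})$, and monotonicity of $x\mapsto(A\sqrt{x}-B)^+$ finishes it. What this buys is a uniform treatment of the boundary cases: the paper's \eqref{23} is stated only when both allocations are strictly positive, and its proof of the corollary does not explicitly exclude the problematic mixed case $\zeta_{t_j}^*>0=\zeta_{t_k}^*$, whereas your argument (and your explicit implication $\zeta_{t_j}^*>0\Rightarrow A\sqrt{P_{d_j}}>B\Rightarrow A\sqrt{P_{d_k}}>B\Rightarrow\zeta_{t_k}^*>0$) rules it out cleanly. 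Your identification of the implicit nature of \eqref{f1} as the one point requiring care is also apt, and your resolution --- that the implicit quantities enter only through user-independent aggregates evaluated at the fixed optimal point --- is exactly right.
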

\begin{proof}From \eqref{23} we have
\begin{equation} \nonumber
\frac{P_{d_k}}{P_{d_j}}=\left(\frac{P_w \zeta_{t_k}^*T+(P_{t_c}T_{t_c}+1)T_{t_c}}{P_w \zeta_{t_j}^*T+(P_{t_c}T_{t_c}+1)T_{t_c}}\right)^2>1\Rightarrow\zeta_{t_k}^*>\zeta_{t_j}^*
\end{equation}
and the equality holds when both of them equal zero.
\end{proof}

Although the training parameters of the two users are the same, the optimal jammer allocates more power to jam the training symbols of the user with more data power, so that it can impair the channel estimate of that user more, which results in poorer data detection and lower data rate.
\begin{corollary}
If $T_{t_k}=T_{t_j}\triangleq T_{t_c}, P_{d_k}=P_{d_j}\triangleq P_{d_c}$ and $P_{t_k}>P_{t_j}$, then $\zeta_{t_k}^*\geq\zeta_{t_j}^*$. 
\begin{proof}
From \eqref{23} we have
\begin{equation} \nonumber
\frac{P_{t_k}}{P_{t_j}}=\left(\frac{P_w\zeta_{t_k}^*T-(T_{t_c}\sqrt{P_{t_k}P_{t_j}}-1)T_{t_c}}{P_w\zeta_{t_j}^*T-(T_{t_c}\sqrt{P_{t_k}P_{t_j}}-1)T_{t_c}}\right)^2>1\Rightarrow \zeta_{t_k}^*>\zeta_{t_j}^*
\end{equation}
and the equality holds when both of them equal zero.
\end{proof}
Hence, when the training duration and data power of two users are the same, the optimal jammer allocates more power to jam the training symbols of the user with larger training power to impair its channel estimate more.
\end{corollary}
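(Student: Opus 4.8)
The plan is to extract the ordering directly from the stationarity identity \eqref{23}, which the optimal allocation must obey for any two users that are jammed during training (i.e.\ $\zeta_{t_i}^*,\zeta_{t_j}^*>0$). Imposing the hypotheses $T_{t_k}=T_{t_j}=T_{t_c}$ and $P_{d_k}=P_{d_j}=P_{d_c}$ cancels the common factor $P_{d_c}T_{t_c}^2$ in \eqref{23}, leaving
\begin{equation}\nonumber
\frac{P_{t_k}}{\bigl(P_w\zeta_{t_k}^*T+(P_{t_k}T_{t_c}+1)T_{t_c}\bigr)^2}=\frac{P_{t_j}}{\bigl(P_w\zeta_{t_j}^*T+(P_{t_j}T_{t_c}+1)T_{t_c}\bigr)^2}.
\end{equation}
First I would take the positive square root of both sides (both are ratios of positive quantities) and cross-multiply. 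The one step that is not purely mechanical is to recombine the cross terms $\sqrt{P_{t_k}}\,P_{t_j}T_{t_c}^2$ and $\sqrt{P_{t_j}}\,P_{t_k}T_{t_c}^2$, each of which equals $\sqrt{P_{t_k}P_{t_j}}\,T_{t_c}^2$ multiplied by $\sqrt{P_{t_j}}$ and by $\sqrt{P_{t_k}}$ respectively; after collecting the coefficients of $\sqrt{P_{t_k}}$ and of $\sqrt{P_{t_j}}$ and abbreviating $c\triangleq(T_{t_c}\sqrt{P_{t_k}P_{t_j}}-1)T_{t_c}$, one obtains
\begin{equation}\nonumber
\sqrt{P_{t_k}}\bigl(P_w\zeta_{t_j}^*T-c\bigr)=\sqrt{P_{t_j}}\bigl(P_w\zeta_{t_k}^*T-c\bigr),\qquad\text{i.e.}\qquad \frac{P_{t_k}}{P_{t_j}}=\left(\frac{P_w\zeta_{t_k}^*T-c}{P_w\zeta_{t_j}^*T-c}\right)^2 .
\end{equation}

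Next I would invoke $P_{t_k}>P_{t_j}$, so the ratio on the left is strictly greater than $1$ and the numerator $P_w\zeta_{t_k}^*T-c$ exceeds the denominator $P_w\zeta_{t_j}^*T-c$ in absolute value. The first equality above also shows that these two affine expressions are positive multiples of one another and hence carry a common sign; provided that sign is nonnegative, ``larger in absolute value'' strengthens to ``larger'', so $P_w\zeta_{t_k}^*T>P_w\zeta_{t_j}^*T$ and therefore $\zeta_{t_k}^*>\zeta_{t_j}^*$. The strict inequality can collapse only when both affine expressions vanish, forcing $\zeta_{t_k}^*=\zeta_{t_j}^*$; together with the boundary case $\zeta_{t_k}^*=\zeta_{t_j}^*=0$ (outside the range where \eqref{23} is active), this accounts for the ``$\ge$'' in the statement.

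The step I expect to be the main obstacle is precisely the sign check just used, since ``common sign and larger magnitude'' gives ``larger value'' only when that common sign is nonnegative; concretely one must verify $P_w\zeta_{t_j}^*T\ge c$ at the optimum. This holds trivially when $c\le0$, i.e.\ when $T_{t_c}\sqrt{P_{t_k}P_{t_j}}\le1$, and otherwise has to be read off the structure of the optimal point --- it is valid in the sufficiently-high-$P_w$ regime in which \eqref{23} is being applied. Granting it, the corollary follows, and it matches the interpretation in the accompanying remark: between two users with identical pilot length and data power, the jammer spends more of its training-jamming energy on the user transmitting pilots at higher power, because corrupting its stronger pilots inflicts the larger channel-estimation error and hence the larger data-rate loss.
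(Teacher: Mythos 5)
Your proposal is correct and follows essentially the same route as the paper: starting from \eqref{23} with the common $T_{t_c}$ and $P_{d_c}$, it recombines the cross terms to obtain exactly the displayed ratio identity with $c=(T_{t_c}\sqrt{P_{t_k}P_{t_j}}-1)T_{t_c}$ and reads off the ordering of $\zeta_{t_k}^*$ and $\zeta_{t_j}^*$. The sign caveat you flag is genuine but is also silently present in the paper's own argument, which passes from the squared ratio exceeding one directly to $\zeta_{t_k}^*>\zeta_{t_j}^*$ without verifying $P_w\zeta_{t_j}^*T\ge c$; as you observe, this is automatic when $T_{t_c}\sqrt{P_{t_k}P_{t_j}}\le 1$ and holds in the sufficiently-high-$P_w$ regime where the interior solution \eqref{16} is valid, so your treatment is, if anything, more careful than the paper's.
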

\begin{figure}[t]
  \centering
  \vspace{-.6cm}
  \includegraphics[width=2.84in]{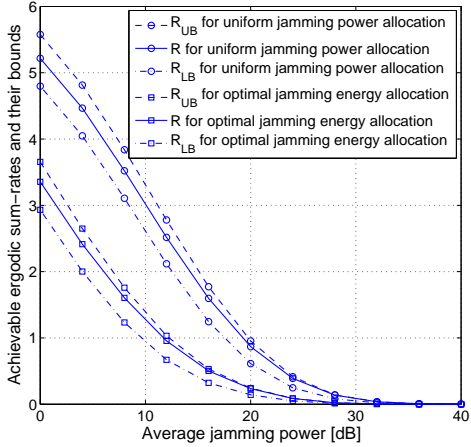}\\
  \vspace{-.5cm}
  \caption{Achievable ergodic sum-rates and their bounds versus the average jamming power $P_w$ for a system of four users with a block length of $T=100$ and training length of $T_t=4$. The legitimate users' energy allocation is set to the optimal case for jamming-free systems. 
  \vspace{-.8cm}
  }\label{fa}
\end{figure}
\begin{corollary}
If $P_{t_k}=P_{t_j}\triangleq P_{t_c}, P_{d_k}=P_{d_j}\triangleq P_{d_c}$ and $T_{t_k}>T_{t_j}$, then $\zeta_{t_k}^*\geq\zeta_{t_j}^*$. 
\begin{proof}
From \eqref{23} we have
\begin{equation} \nonumber
\frac{T_{t_k}}{T_{t_j}}=\frac{P_w\zeta_{t_k}^*T-T_{t_k}T_{t_j}P_{t_c}}{P_w\zeta_{t_j}^*T-T_{t_k}T_{t_j}P_{t_c}}>1\Rightarrow\zeta_{t_k}^*>\zeta_{t_j}^*
\end{equation}
and the equality holds when both of them equal zero.
\end{proof}
Hence, when the training and data power of two users are the same, the optimal jammer allocates more energy to jam the training symbols of the user with larger training duration.
\end{corollary}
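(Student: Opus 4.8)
The plan is to read the claim off the stationarity relation \eqref{23}, which the proof of Theorem~\ref{Th_1} derives for any two users $i,j$ whose optimal training allocations are both strictly positive (it is what remains after complementary slackness \eqref{20} kills the inequality multipliers $\lambda_i^*,\lambda_j^*$ in \eqref{21}). First I would dispose of the boundary cases: if $\zeta_{t_j}^*=0$ the inequality $\zeta_{t_k}^*\geq\zeta_{t_j}^*$ is automatic, and if $\zeta_{t_k}^*=\zeta_{t_j}^*=0$ we have the asserted equality; so all the work is in the interior case $\zeta_{t_k}^*,\zeta_{t_j}^*>0$, where \eqref{23} applies.

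In that case I would substitute the hypotheses $P_{t_k}=P_{t_j}=P_{t_c}$ and $P_{d_k}=P_{d_j}=P_{d_c}$ into \eqref{23}. The common factor $P_{d_c}P_{t_c}$ cancels, and since the expressions inside the two squared brackets are positive and $T_{t_k},T_{t_j}>0$, I can take positive square roots, cross-multiply and expand. The cross term $T_{t_k}T_{t_j}$ cancels on both sides, and a short rearrangement brings the identity into the ratio form
\[
\frac{T_{t_k}}{T_{t_j}}=\frac{P_w\zeta_{t_k}^*T-T_{t_k}T_{t_j}P_{t_c}}{P_w\zeta_{t_j}^*T-T_{t_k}T_{t_j}P_{t_c}},
\]
so that the numerator and denominator of the right-hand fraction differ only in the single term $P_w\zeta_{t_k}^*T$ versus $P_w\zeta_{t_j}^*T$.

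To finish, note that $T_{t_k}>T_{t_j}$ makes the left-hand side, hence the right-hand fraction, exceed $1$; the numerator on the right therefore strictly exceeds the denominator, and cancelling the common quantity $T_{t_k}T_{t_j}P_{t_c}$ gives $P_w\zeta_{t_k}^*T>P_w\zeta_{t_j}^*T$, that is, $\zeta_{t_k}^*>\zeta_{t_j}^*$. Combined with the boundary discussion this yields $\zeta_{t_k}^*\geq\zeta_{t_j}^*$, with equality exactly when both vanish; the interpretation is that, with the other parameters held equal, the optimal jammer devotes more of its energy budget to attacking the pilots of the user with the longer training sequence.

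The step I expect to need the most care is the very last one: passing from ``the fraction exceeds $1$'' to ``numerator exceeds denominator'' is legitimate only if the common denominator $P_w\zeta_{t_j}^*T-T_{t_k}T_{t_j}P_{t_c}$ is positive, since a negative denominator would reverse the inequality. The square-rooted form of \eqref{23} already forces the numerator and denominator to share a sign, so it suffices to verify that one of them is positive; this holds in the high-power regime singled out in Theorem~\ref{Th_1}, where all the $\zeta_{t_i}^*$ are strictly positive and $P_w\zeta_{t_j}^*T$ dominates the fixed quantity $T_{t_k}T_{t_j}P_{t_c}$. I would therefore carry out this positivity check --- rather than the routine algebra producing the ratio identity --- with the most attention, restricting to that regime if necessary.
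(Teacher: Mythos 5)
Your argument is essentially the paper's: specialize \eqref{23} to equal training and data powers, take square roots, cross-multiply and cancel to reach the same ratio identity, and read off $\zeta_{t_k}^*>\zeta_{t_j}^*$ from the ratio exceeding one, with equality only when both allocations vanish. The denominator-sign caveat you flag is genuine rather than routine --- a short rearrangement of the identity shows $\zeta_{t_k}^*>\zeta_{t_j}^*$ holds precisely when $P_wT\zeta_{t_j}^*>T_{t_k}T_{t_j}P_{t_c}$, a positivity condition the paper's printed proof uses implicitly without verification --- so your version, which insists on checking it (restricting to the sufficiently-high-$P_w$ regime of Theorem~\ref{Th_1} if necessary), is if anything more careful than the original.
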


\begin{corollary}
If $P_w\rightarrow \infty$, the optimal jamming energy allocation is given by
\begin{equation} \nonumber
\zeta_{t_k}^*=\frac{T_{t_k}\sqrt{P_{t_k}P_{d_k}}}{2\sum_{i=1}^{K}T_{t_i}\sqrt{P_{t_i}P_{d_i}}},\quad\zeta_d^*=\frac{1}{2}.
\end{equation}
\begin{proof}
The proof is straightforward by the equations in \eqref{16} and \eqref{17}.
\end{proof}
\end{corollary}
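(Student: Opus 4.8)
The plan is to obtain the result by taking the limit $P_w\to\infty$ directly in the closed-form expressions \eqref{16} and \eqref{17}. These are the natural starting point because Theorem~\ref{Th_1} guarantees that, for $P_w$ sufficiently large, all of $\{\zeta_{t_k}^*\}_{k\in\mathcal{K}}$ and $\zeta_d^*$ are strictly positive and are given \emph{exactly} by \eqref{16}--\eqref{17}; the large-$P_w$ regime therefore falls squarely inside the range of validity of these formulas, so one avoids having to track the asymptotics of the multiplier $\nu^*$ in the more general system \eqref{f1}--\eqref{15}.

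For $\zeta_d^*$, I would observe from \eqref{17} that the correction term $\big(T_t+\delta-T_d(1+\sum_{i\in\mathcal{K}}P_{d_i})\big)/(2P_wT)$ has a numerator independent of $P_w$ and a denominator growing linearly in $P_w$, hence it tends to zero and $\zeta_d^*\to\tfrac12$. For $\zeta_{t_k}^*$, I would divide numerator and denominator of \eqref{16} by $P_w$: the numerator becomes $T$ plus a term that is $O(1/P_w)$, while the denominator equals $2T\eta/(T_{t_k}\sqrt{P_{d_k}P_{t_k}})$ with no remaining $P_w$ dependence, so $\zeta_{t_k}^*\to T_{t_k}\sqrt{P_{d_k}P_{t_k}}/(2\eta)$; substituting $\eta=\sum_{i\in\mathcal{K}}T_{t_i}\sqrt{P_{d_i}P_{t_i}}$ from \eqref{18} yields the stated expression.

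Finally I would record a consistency check: each limiting value is strictly positive because every $T_{t_i}$, $P_{t_i}$, $P_{d_i}$ is positive, and $\sum_{k\in\mathcal{K}}\zeta_{t_k}^*+\zeta_d^*=\tfrac12\sum_{k\in\mathcal{K}}T_{t_k}\sqrt{P_{d_k}P_{t_k}}/\eta+\tfrac12=\tfrac12+\tfrac12=1$, so the limiting allocation is feasible for \eqref{11}--\eqref{12}, as it must be if it is the limit of optimal points. I do not anticipate any real obstacle: the computation is just the limit of a ratio of affine functions of $P_w$. The only point needing a moment's care is justifying that \eqref{16}--\eqref{17} (rather than \eqref{f1}--\eqref{15}) may legitimately be used here, which is precisely the ``sufficiently high $P_w$'' hypothesis already established in Theorem~\ref{Th_1}.
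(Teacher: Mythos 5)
Your proposal is correct and follows exactly the route the paper intends: its one-line proof simply invokes \eqref{16}--\eqref{17}, and you carry out the straightforward limit $P_w\to\infty$ in those closed-form expressions (with the appropriate appeal to the ``sufficiently high $P_w$'' positivity guarantee of Theorem~\ref{Th_1}). The added feasibility check $\sum_k\zeta_{t_k}^*+\zeta_d^*=1$ is a nice touch but not a departure from the paper's argument.
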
 

\vspace{-0.4cm}
\section{Numerical Results and Discussion}
In this section, we present numerical results on the optimal energy allocation of the smart jammer. 
We have considered four users with average power budgets 5, 10, 15, and 20 dB, respectively. Our model of the legitimate users is a special case of the system model in \cite{Soysal}. Hence, we have used the results of \cite{Soysal} to set the values of  $P_{t_k}$ and $P_{d_k}$ which satisfy the given average power budget and maximize the achievable ergodic sum-rate in jamming-free systems. According to \cite{Soysal}, the optimal training duration for each of the users equals one and hence the total training duration equals four.

Fig.~\ref{fa} shows the performance gain of the optimal jammer over a jammer with uniform power allocation, i.e., $\zeta_{t_k}=T_{t_k}/T$ and $\zeta_d=T_d/T$. We plot $R_{LB}$, $R$ and $R_{UB}$ achieved by using uniform power allocation and also the optimal energy allocation given by Theorem 1. We observe that the data rate degradation caused by the designed jammer is significant.
\begin{figure}[t]
  \centering
  \vspace{-.6cm}
  \includegraphics[width=2.89in]{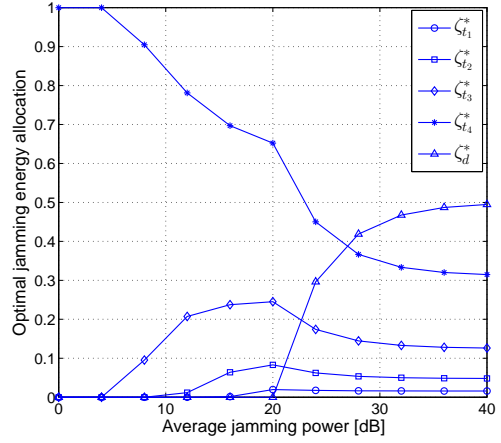}\\
  \vspace{-.5cm}
  \caption{Optimal jamming energy allocation versus the average jamming power $P_w$ for a system of four users with a block length of $T=100$ and training length of $T_t=4$.
  \vspace{-.8cm}
  }\label{fb}
\end{figure}

Fig.~\ref{fb} illustrates the optimal jamming energy allocation versus the average jamming power budget $P_w$. For very small values of $P_w$, the optimal jammer allocates all its power to jam the training symbol of user 4, which has the largest power budget. As $P_w$ increases, the jammer also allocates some power to jam the training signals of users 3, 2, and 1, respectively. Hence, as $P_w$ increases, the training symbols of the users with more power budgets get jammed by the jammer sooner and they are allocated more jamming energy, i.e., $\zeta_{t_4}^*>\zeta_{t_3}^*>\zeta_{t_2}^*>\zeta_{t_1}^*$. The intuitive interpretation for this behaviour is that the jammer tends to allocate more energy to jam the training symbols of the users which contribute more to the ergodic sum-rate. We observe that when $P_w$ is less than 20 dB, all power should be spent on jamming the training phase. If $P_w$ exceeds 20 dB, the jammer starts jamming the data transmission phase and as $P_w$ increases more, the optimal values of $\{\zeta_{t_k}\}_{k\in \mathcal{K}}$ and $\zeta_d$ approach the values given in Corollary 4. Interestingly, the jammer allocates its energy equally between the training and data transmission phases for sufficiently high values of $P_w$. In general, determining the point where the jammer starts jamming the data transmission phase requires solving the $K+2$ non-linear equations in \eqref{f1}-\eqref{15}. However, if $\zeta_d$ is the first among the energy allocation ratios which approaches zero while $P_w$ decreases from higher values, which can be inferred from the equations in \eqref{16} and \eqref{17}, the point where $\zeta_d$ starts to increase can be calculated by setting the equation in \eqref{17} equal to zero and solving the resulting equation for $P_w$.
\vspace{-0.5cm}

\ifCLASSOPTIONcaptionsoff
  \newpage
\fi

\bibliographystyle{IEEEtran}
\bibliography{reference}

\end{document}